\documentclass[twoside]{article}

\usepackage[accepted]{aistats2021}
\usepackage{bm, amssymb, amsthm, graphicx, booktabs}

%
%


\setlength{\pdfpageheight}{11in}
\setlength{\pdfpagewidth}{8.5in}



\newtheorem{defi}{Definition}
\newtheorem{theorem}{Theorem}
\newtheorem{proposition}{Proposition}

\newcommand{\argmax}{\mathop{\rm arg~max}\limits}
\newcommand{\argmin}{\mathop{\rm arg~min}\limits}

\begin{document}

%

%

\twocolumn[

\aistatstitle{Bayesian Model Averaging for Causality Estimation and \\its Approximation based on Gaussian Scale Mixture Distributions}

\aistatsauthor{Shunsuke Horii}

\aistatsaddress{ Waseda University } ]

\begin{abstract}
  In the estimation of the causal effect under linear Structural Causal Models (SCMs), it is common practice to first identify the causal structure, estimate the probability distributions, and then calculate the causal effect.
However, if the goal is to estimate the causal effect, it is not necessary to fix a single causal structure or probability distributions.
In this paper, we first show from a Bayesian perspective that it is Bayes optimal to weight (average) the causal effects estimated under each model rather than estimating the causal effect under a fixed single model.
This idea is also known as Bayesian model averaging.
Although the Bayesian model averaging is optimal, as the number of candidate models increases, the weighting calculations become computationally hard.
We develop an approximation to the Bayes optimal estimator by using Gaussian scale mixture distributions.
\end{abstract}

\section{Introduction}

\label{introduction}

Research on causal inference to examine the magnitude of the effect of the treatment variable on the outcome variable is one of the important tasks in data science.
Fisher's randomized controlled trial is one of the most important methods to examine the causal effect and is often considered as the gold standard for causal inference \cite{fisher1951design}.
However, complete randomization is often impracticable due to cost or ethical reasons, demanding methods for identifying causal effects for non-experimental observational studies.

To identify causal effects in observational studies, it is necessary to make some assumptions on the data generating process.
There are various interrelated approaches to statistically estimate causal effects, including those based on propensity scores \cite{guo2014propensity}, those based on instrumental variables \cite{angrist1996identification}, and those based on Structural Causal Model (SCM) \cite{pearl2000causality}.
In this study, we focus on the causality estimation based on SCM.
In SCM, we are interested in examining the causal effect of a treatment variable $X$ on an outcome variable $Y$, which is written as an experimental distribution $p(y|\mbox{do}(X=x))$, where $\mbox{do}(X=x)$ means that $X$ is fixed to $x$ by an intervention.
The causal effect $p(y|\mbox{do}(X=x))$ is calculated based on the knowledge on a causal graph $G$, which describes a coarse relationship among the observable variables, and a distribution $P(X, Y, Z_{1},\ldots,Z_{c})$, where $Z_{1},\ldots,Z_{c}$ are observable variables (covariates) other than $X$ and $Y$.

In general, we do not know the true causal graph $G$, so there are various methods for estimating the causal graph from the data \cite{spirtes1991algorithm, cooper1992bayesian, heckerman1995learning, shimizu2006linear}.
Furthermore, to calculate the causal effect, it has to estimate the conditional distributions among the variables.
Therefore, a general way to estimate the causal effect consits of the following steps.
\begin{enumerate}
    \item Estimate the causal graph from the data
    \item Estimate the conditional distributions among the variables from the data
    \item Calculate the causal effect
\end{enumerate}
However, if the goal is to estimate the causal effect, it is not necessary to fix a single causal graph or distributions.
We first derive the Bayesian optimal estimator when the problem is to estimate the mean causal effect under the squared error loss.
The Bayes optimal solution turns out to be the estimator that weights the mean causal effects estimated under each model, which is also known as Bayesian model averaging \cite{hoeting1999bayesian}.
In the literature of causal inference, Bayesian model averaging is applied for propensity score analysis \cite{kaplan2014bayesian}, however, our study is an attempt to apply it for the causal inference based on SCM.
The basic idea behind our proposal is to consider the causal graph and the parameters of the distributions as parameters to be marginalized in estimating the causal effect.
Although the model setting is different, a similar idea can be found in \cite{rubin1978bayesian}, where the values of the unobserved potential outcomes are treated as parameters to be marginalized.

Although the Bayesian model averaging is optimal, as the number of candidate models increases, the weighting calculations become computationally hard.
In this paper, we develop an approximation algorithm for the case where the following assumptions hold.
\begin{enumerate}
    \item We know the set of possible directed edges, including information on the directions of the edges.
    \item We do not know whether each element of the above set of possible directed edges exists. We only know the prior probability that each directed edge exists.
\end{enumerate}
The above assumptions make sense in some practical applications.
For example, consider the case where there are $m$ variables $X_{1},\ldots,X_{m}$ and only the antecedent relations, such as nodes with higher index numbers do not precede nodes with lower index numbers, to the causal relations among these variables are known.
This is the situation dealt with in \cite{wermuth1982graphical}.
In this case, the set of possible directed edges are given by $\left\{(i,j):i<j, 1\le i,j\le m\right\}$.

Even if we put the constraint above, the number of models grows exponentially with the number of possible directed edges.
We develop an approximation to the Bayes optimal estimator by using sparse modeling techniques.
The basic idea is to approximate a mixture distribution of Dirac delta function and Gaussian distribution with Gaussian scale mixture (GSM).
In the prediction task, it is reported that performance similar to Bayesian model averaging can be achieved by using horseshoe prior, which is a kind of GSM \cite{carvalho2009handling}.
This study shows that a similar approach is also effective in estimating causal effects.

The paper is organized as follows.
In Section 2, some preliminary materials about SCM and corresponding causality notions such as the intervention effect and the mean intervention effect are described.
In Section 3, we formulate the problem to estimate the mean intervention effect as a statistical decision problem and derive the optimal decision function under the Bayes criterion.
Section 4 describes the further assumptions for the causal graph in this study and derives an approximation algorithm for the Bayes optimal decision function under these assumptions.
Section 5 presents some experimental results to evaluate our proposals.
Finally, we give a summary in Section 6.

\section{Preliminaries}

As mentioned in the introduction, there are some approaches to statistically estimate causal effects.
Since this study follows the framework of SCM, we borrow some basic notions of SCM.
\begin{defi}
Let $G$ be a directed acyclic graph (DAG) and $V=(X_{1},X_{2},\ldots,X_{m})$ be a set of random variables that corresponds to the set of the vertices of $G$.
We sometimes write $E_{G}$ as the set of directed edges of $G$ and $G_{E}$ as the DAG whose set of edges is $E$.
The DAG $G$ is called a causal graph if it specifies the causal relationships among variables in the following form,
\begin{align}
    X_{i} = g_{i}({\rm pa}(X_{i}),\epsilon_{i}),\quad i=1,\ldots,m,\label{structural_equation}
\end{align}
where ${\rm pa}(X_{i})\subset V$ is the set of variables that have a directed edge that heads to $X_{i}$ and $\epsilon_{i}$ is an error term\footnote{Unless it causes a confusion, ${\rm pa}(X)$ is also written as ${\rm pa}(x)$. They sometimes denote the indices of the nodes.}.
In this study, we assume that $\epsilon_{i}$ follows the Gaussian distribution $\mathcal{N}(0, s_{\epsilon}^{-1})$.
The equations (\ref{structural_equation}) are called structural equations for $X_{1}, X_{2}, \ldots, X_{m}$.

When the functions $g_{i},i=1,\ldots,m$ are linear, i.e.,
\begin{align}
    X_{i}=\sum_{X_{j}\in {\rm pa}(X_{i})}\theta_{X_{j}X_{i}}X_{j}+\epsilon_{i},\quad i=1,\ldots,m,
\end{align}
the model is called linear SCM.
If there is no confusion, $\theta_{X_{i}X_{j}}$ is written as $\theta_{ij}$ for short.
\end{defi}

One may think that the parametric assumption among the variables is too strong, however, a theory built on a simple model will be the foundations for a theory in more complex models, and as we will show in later experiments, even such a simple model works for some real-world problems.

Structural equations and causal graphs express coarse causal relationships among the variables.
Given these information, we want to know the causal effect of a treatment variable $X\in V$ on an outcome variable $Y\in V$ when $X$ is fixed to a value $x$ by an external intervention.
It is mathematically defined as follows \cite{pearl2000causality}.

\begin{defi}
Let $V=\left\{X, Y, Z_{1},\ldots,Z_{c}\right\}$ be the set of vertices of a causal graph $G$.
The causal (intervention) effect on $Y$ when $X$ is fixed to $x$ by an external intervention is defined as
\begin{multline}
    p(y|{\rm do}(X=x))=\\ 
    \int\cdots\int \frac{p(x,y,z_{1},\ldots,z_{c})}{p(x|{\rm pa}(x))}{\rm d}z_{1}\ldots {\rm d}z_{c},
\end{multline}
where ${\rm do}(X=x)$ means that $X$ is fixed to $x$ by an intervention.
\end{defi}

The intervention effect $p(y|{\rm do}(X=x))$ is defined as an experimental distribution.
In this study, we focus on the estimation of their mean (expectation), the Mean Intervention Effect (MIE).
The MIE $\bar{y}_{x}$ is defined as
\begin{align}
    \bar{y}_{x}=\int y\cdot p(y|\mbox{do}(X=x)){\rm d}y.
\end{align}
When the model is linear SCM, it is known that the MIE is expressed as 
\begin{align}
    \bar{y}_{x}=\left(\sum_{l\in\mathcal{P}}\prod_{(i,j)\in l}\theta_{ij}\right)x,\label{MIE}
\end{align}
where $\mathcal{P}$ is the set of the directed paths from $X$ to $Y$ \cite{pearl2000causality}.
This is an equivalent notion of the total effect in \cite{wright1921correlation}.

We note that the main interest of the existing studies in the literature is the identifiability of the total effect.
For example, Pearl proved a valuable result that claims that we can identify the total effect if we can observe some covariates that satisfy the backdoor criterion \cite{pearl2000causality}.
On the other hand, our focus is how to estimate the total effect.
Although it would be interesting to combine our results with the existing results, we do not use those results in this paper.

\section{Bayesian estimation of the mean intervention effect}

In order to calculate the MIE (\ref{MIE}), one has to know the underlying causal graph $G$ and conditional distributions $p(x_{i}|{\rm pa}(x_{i})), i=1,\ldots,m$.
Therefore, the data analyst must either proceed with the belief that the assumed causal graph is correct, or estimate the causal graph from the data.
Although there are various methods for estimating the causal graph from the data \cite{spirtes1991algorithm, cooper1992bayesian, heckerman1995learning, shimizu2006linear}, there is a possibility that these methods output a wrong causal graph.
Furthermore, previous studies are rarely concerned about how to estimate the probability distributions $p(x_{i}|{\rm pa}(x_{i})), i=1,\ldots,m$ because their main concern is about the identifiability of the causal effects given the causal graph and probability distributions.
However, it is uncommon to assume that the probability distribution is known without knowing the data generating causal graph.
In our setting, we need to deal with the estimation of the probability distributions as well as the estimation of the causal graph.

We assume that the causal diagram $G$ is a random variable that takes its value in the set of DAGs $\mathcal{G}$ and whose prior distribution is $p(G)$.
This prior distribution represents, for example, the data analyst's confidence in each causal graph.
Since we deal with the linear SCM, given a causal graph $G$, conditional distributions $p(x_{i}|{\rm pa}(x_{i}))$ are parameterized by $\theta_{ij},(i,j)\in E_{G}$ as follows.
\begin{align}
    p(x_{i}|{\rm pa}(x_{i}))=\mathcal{N}\left(\sum_{j:(j,i)\in E_{G}}\theta_{ji}x_{j},s_{\epsilon}^{-1}\right).
\end{align}
Throughout the paper, we assume that the precision parameter $s_{\epsilon}$ of error is known.
We can extend our results to the unknown case by assuming a prior distribution for $s_{\epsilon}$.
Let $\bm{\theta}_{G}=\left(\theta_{ij}:(i,j)\in E_{G}\right)$.
We assume that the parameter $\bm{\theta}_{G}$ is also a random vector whose conditional distribution under $G$ is $p(\bm{\theta}_{G}|G)$.
Since the MIE is the function of $G$ and $\bm{\theta}_{G}$, we write it as $\bar{y}_{x}(G,\bm{\theta}_{G})$.

Let $D^{N}=\left(x_{n}, y_{n}, z_{1,n},\ldots,z_{c,n}\right)_{n=1,\ldots,N}$ be a sample of $V=(X, Y, Z_{1},\ldots,Z_{c})$ with sample size $N$. \footnote{We sometimes do not distinguish $X$ and $Y$ from other covariates $Z_{1},\ldots,Z_{c}$. In that case, we write the set of variables as $V=(X_{1},X_{2},\ldots,X_{m})$.}
We consider the problem of estimating the MIE (\ref{MIE}) given $D^{N}$.
Let $d:D^{N}\mapsto \mathbb{R}$ be a decision function that outputs an estimate of the MIE.
The loss function is the metric between the estimand and the decision function.
In this study, the squared error loss is used, i.e., 
\begin{align}
    \ell(G, \bm{\theta}_{G}, d(D^{N}))=\left(\bar{y}_{x}(G, \bm{\theta}_{G})-d(D^{N})\right)^{2}. \label{square_error}
\end{align}
In the statistical decision theory framework \cite{berger2013statistical}, the risk function and the Bayes risk function are defined as follows, respectively.
\begin{align}
    R(G, \bm{\theta}_{G},d)&={\rm E}_{D^{N}|G,\bm{\theta}_{G}}\left[\ell(G,\bm{\theta}_{G},d(D^{N}))\right],\\
    BR(d)&={\rm E}_{G}\left[{\rm E}_{\bm{\theta}_{G}|G}R(G,\bm{\theta}_{G},d)\right].
\end{align}
The Bayes optimal estimator, that minimizes the Bayes risk function, is given as follows.
\begin{theorem}
When the loss function is the squared loss (\ref{square_error}), the Bayes optimal estimator of the MIE is given by
\begin{multline}
d^{*}(D^{N})=\\
    \sum_{G\in\mathcal{G}}p(G|D^{N})\int \bar{y}_{x}(G,\bm{\theta}_{G})p(\bm{\theta}_{G}|G,D^{N}){\rm d}\bm{\theta}_{G},\label{BO_estimator}
\end{multline}
where
\begin{align}
    p(G|D^{N})&=\frac{p(D^{N}|G)p(G)}{\sum_{G\in\mathcal{G}}p(D^{N}|G)p(G)},\label{bayes_rule}\\
    p(D^{N}|G)&=\int p(D^{N}|G,\bm{\theta}_{G})p(\bm{\theta}_{G}|G){\rm d}\bm{\theta}_{G},\\
    p(\bm{\theta}_{G}|G,D^{N})&=\frac{p(D^{N}|G,\bm{\theta}_{G})p(\bm{\theta}_{G}|G)}{\int p(D^{N}|G,\bm{\theta}_{G})p(\bm{\theta}_{G}|G){\rm d}\bm{\theta}_{G}}.
\end{align}
\end{theorem}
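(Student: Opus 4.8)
The plan is to invoke the standard fact from Bayesian decision theory that, under squared error loss, the Bayes optimal decision is the posterior mean of the estimand, and then to make that posterior mean explicit by decomposing it over the graph posterior and the parameter posterior. Writing $W=\bar{y}_{x}(G,\bm{\theta}_{G})$ for the random estimand, I would first combine the two nested expectations in the definition of $BR(d)$ into a single expectation over the joint law of $(G,\bm{\theta}_{G},D^{N})$,
\begin{align}
BR(d)={\rm E}_{G,\bm{\theta}_{G},D^{N}}\left[\left(W-d(D^{N})\right)^{2}\right],
\end{align}
and then apply the tower property to condition on the data first:
\begin{align}
BR(d)={\rm E}_{D^{N}}\left[{\rm E}_{G,\bm{\theta}_{G}|D^{N}}\left[\left(W-d(D^{N})\right)^{2}\right]\right].
\end{align}

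The key observation is that once $D^{N}$ is fixed, $d(D^{N})$ is merely a real number, so the inner conditional expectation can be minimized separately for each realization of $D^{N}$. Using the bias--variance decomposition,
\begin{align}
{\rm E}_{G,\bm{\theta}_{G}|D^{N}}\left[\left(W-d(D^{N})\right)^{2}\right]={\rm Var}\!\left(W\mid D^{N}\right)+\left({\rm E}\!\left[W\mid D^{N}\right]-d(D^{N})\right)^{2},
\end{align}
which is minimized pointwise in $D^{N}$, and hence over the whole class of decision functions simultaneously, exactly when $d(D^{N})={\rm E}\!\left[W\mid D^{N}\right]$. This already identifies the Bayes optimal estimator as the posterior mean $d^{*}(D^{N})={\rm E}_{G,\bm{\theta}_{G}|D^{N}}\!\left[\bar{y}_{x}(G,\bm{\theta}_{G})\right]$.

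The remaining work is bookkeeping. I would expand this posterior mean by factoring the joint posterior of $(G,\bm{\theta}_{G})$ as $p(G,\bm{\theta}_{G}\mid D^{N})=p(G\mid D^{N})\,p(\bm{\theta}_{G}\mid G,D^{N})$, which turns the single expectation into the sum over $\mathcal{G}$ and the integral over $\bm{\theta}_{G}$ appearing in (\ref{BO_estimator}). The explicit forms of $p(G\mid D^{N})$, $p(D^{N}\mid G)$, and $p(\bm{\theta}_{G}\mid G,D^{N})$ then follow directly from Bayes' rule together with marginalization of $\bm{\theta}_{G}$, reproducing the three displayed expressions.

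I expect the only genuinely delicate point to be the justification of the reduction from minimizing over the (infinite-dimensional) class of decision functions to minimizing pointwise for each $D^{N}$: this requires that the interchange of the order of integration be valid, which is supplied by Tonelli's theorem since the loss is nonnegative, and that the posterior mean ${\rm E}[W\mid D^{N}]$ be finite almost surely, which holds under mild integrability of $\bar{y}_{x}$ under the prior. Everything after this decomposition is routine.
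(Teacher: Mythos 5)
Your proof is correct and follows essentially the same route as the paper: both reduce the Bayes-risk minimization to pointwise minimization of the posterior expected loss and identify the minimizer as the posterior mean, which is then factorized over $p(G\mid D^{N})$ and $p(\bm{\theta}_{G}\mid G,D^{N})$ to yield (\ref{BO_estimator}). The only difference is that you prove the standard decision-theoretic reduction yourself (tower property, Tonelli, bias--variance decomposition), whereas the paper simply cites it from Berger's textbook and states the minimizer.
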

\begin{proof}
It is known that the decision function that minimizes the loss function weighted by the posterior distribution is Bayes optimal \cite{berger2013statistical}.
That is,
\begin{align}
    d^{*}(D^{N})=\argmin_{d}\int  \left\{\left(\bar{y}_{x}(G, \bm{\theta}_{G})-d(D^{N})\right)^{2}\right.\times \nonumber\\
    \left. \sum_{G\in\mathcal{G}}p(G|D^{N}) p(\bm{\theta}_{G}|G,D^{N})\right\}{\rm d}\bm{\theta}_{G}.
\end{align}
The solution of this minimization problem is given by (\ref{BO_estimator}).
\end{proof}
Note that the Bayes optimal estimator of the intervention effect under the Kullback-Leibler loss is given in \cite{horii2019note}.

In general, numerical integration is required to calculate the integrals in (\ref{BO_estimator}) since $\bar{y}_{x}(G,\bm{\theta}_{G})$ is a nonlinear function of $\bm{\theta}_{G}$.
When the computational complexity of the numerical integration is large, we approximate (\ref{BO_estimator}) by
\begin{align}
    \tilde{d}^{*}(D^{N})&=\sum_{G\in\mathcal{G}}p(G|D^{N})\bar{y}_{x}(G,\bm{\theta}_{G}^{MAP}),\label{BO_estimator_2}\\
    \bm{\theta}_{G}^{MAP}&=\argmax_{\bm{\theta}_{G}}p(\bm{\theta}_{G}|G,D^{N}).\label{MAP_estimator}
\end{align}
This approximation is based on the property that the posterior distribution $p(\bm{\theta}_{G}|G,D^{N})$ is asymptotically concentrated around $\bm{\theta}_{G}^{MAP}$ under some appropriate conditions \cite{le2012asymptotic}.
To keep the description concise, we call the estimator (\ref{BO_estimator_2}) Bayes quasi-optimal estimator.

If the prior distribution $p(\bm{\theta}_{G}|G)$ is a product of Gaussian, namely, 
\begin{align}
    p(\bm{\theta}_{G}|G)=\prod_{(i,j)\in E_{G}}\mathcal{N}(\theta_{ij};0,\tau),\label{parameter_prior}
\end{align}
we can analytically calculate $p(G|D^{N})$ and $p(\bm{\theta}_{G}|G, D^{N})$.
See the supplementary material for the derivation.

\section{Approximate Bayes Optimal Estimator}

The problem with calculating (\ref{BO_estimator}) or (\ref{BO_estimator_2}) in practical applications is that its computational complexity is proportional to the number of candidate models.
The number of possible DAGs for a given set of variables $V=(X_{1}, X_{2},\ldots, X_{m})$ is $O\left(2^{m^{2}}\right)$.
However, it would be uncommon to know nothing at all about the structure of a graph, and we might know the causal relationships among some variables and not the rest.
Therefore, in this study, we classify the directed edges expressing the causal relationships among variables into three types.
\begin{itemize}
    \item It is known that there is a causal relationship between variables, including information on which is the cause and which is the result.
    In other words, there is a directed edge between the corresponding variables with probability 1.
    \item It is known that there is no causal relationship between the corresponding variables.
    In other words, there is directed edge between the corresponding variables with probability 0.
    \item If there is a causal relationship, it is known that which is the cause and which is the result, but it is not clear whether the causal relationship exists.
    In other words, there is a directed edge between the corresponding variables with some probability.
\end{itemize}

For the sake of simplicity, we assume that the set of the edges of the first type is empty, since they can be considered as the edges of the third type with edge existence probability set to 1.

Let $E_{\rm full}$ be a set of possible directed edges connecting nodes that may or may not have a causal relationship between the corresponding variables.
We assume that $(j,i)\notin E_{\rm full}$ if $(i,j)\in E_{\rm full}$.
The set $\mathcal{G}$ of the candidate causal graphs is defined as
\begin{align}
    \mathcal{G}=\left\{G\ :\ E_{G}\subseteq E_{\rm full}\right\}.\label{restricted_models}
\end{align}
Let $G_{\rm full}$ be the abbreviation for $G_{E_{\rm full}}$.
Figure \ref{fig:graph_example} depicts an example of $G_{\rm full}$ and $\mathcal{G}$.

\begin{figure}[t]
\vskip 0.2in
\begin{center}
\centerline{\includegraphics[width=\columnwidth]{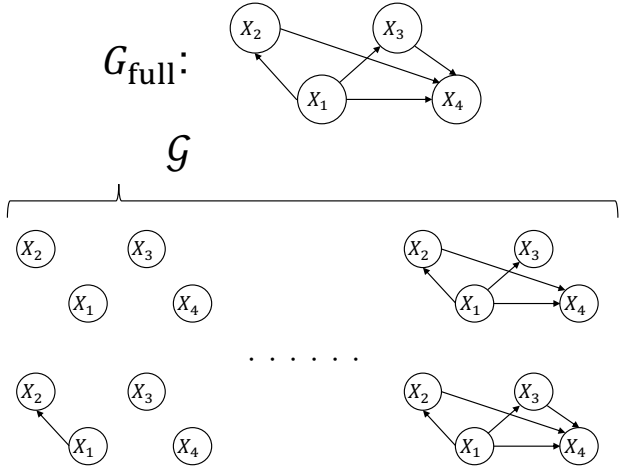}}
\caption{An example of $G_{\rm full}$ and $\mathcal{G}$. In this case, the number of candidate causal graphs $|\mathcal{G}|$ is $2^{5}=32$.}
\label{fig:graph_example}
\end{center}
\vskip -0.2in
\end{figure}

We assume that there is a causal relationship between $X_{i}$ and $X_{j}$ for $(i,j)\in E_{\rm full}$ with probability $p$, that is, a directed edge $(i,j)$ exists with probability $p$.
Then, the prior distribution $p(G)$ is given by
\begin{align}
    p(G)=p^{|E_{G}|}(1-p)^{|E_{\rm full}\setminus E_{G}|},\quad  \forall G\in \mathcal{G}.\label{model_prior}
\end{align}

Even if the set of candidate models is restricted to (\ref{restricted_models}), the number of candidate models $|\mathcal{G}|$ is $2^{|E_{\rm full}|}$, and when $|E_{\rm full}|$ is large, the calculation of (\ref{BO_estimator}) becomes infeasible.

For $G_{\rm full}$, consider the following experimental prior distribution.
\begin{align}
    &p_{ss}(\bm{\theta}_{G_{\rm full}})=\prod_{(i,j)\in E_{\rm full}}p_{ss}(\theta_{ij}),\label{spike_slab_1}\\
    &p_{ss}(\theta_{ij})=\left(1-p)\delta_{0}(\theta_{ij})+p\mathcal{N}(\theta_{ij}; 0, \tau)\right),\label{spike_slab_2}
\end{align}
where $\delta_{0}(\cdot)$ is the Dirac delta function.
Such priors are often called spike-and-slab priors \cite{ishwaran2005spike}.
Then, the following proposition holds.
\begin{proposition}
The prior distribution of $\theta_{ij}, (i,j)\in E_{\rm full}$ is the same when (\ref{spike_slab_1}) and (\ref{spike_slab_2}) are assumed for $p(\bm{\theta}_{G_{\rm full}})$ and (\ref{parameter_prior}) and (\ref{model_prior}) are assumed for $p(\bm{\theta}_{G}|G)$ and $p(G)$.
That is, 
\begin{align}
    p_{ss}(\bm{\theta}_{G_{\rm full}})=\sum_{G\in \mathcal{G}}p(\bm{\theta}_{G}|G)p(G).
\end{align}
\end{proposition}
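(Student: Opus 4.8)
The plan is to prove the identity by expanding the product that defines $p_{ss}(\bm{\theta}_{G_{\rm full}})$ and matching the resulting terms to the graphs in $\mathcal{G}$ one by one. Before doing so, I would first settle the measure-theoretic interpretation that makes the two sides comparable: the left-hand side is a (generalized) density on the full space $\mathbb{R}^{|E_{\rm full}|}$, whereas each $p(\bm{\theta}_G|G)$ on the right is a genuine Gaussian density only on the $|E_G|$-dimensional subspace indexed by $E_G$. To place both sides on the same footing, I would read $p(\bm{\theta}_G|G)$ as the lifted density
\begin{align}
    \prod_{(i,j)\in E_G}\mathcal{N}(\theta_{ij};0,\tau)\prod_{(i,j)\in E_{\rm full}\setminus E_G}\delta_0(\theta_{ij}),
\end{align}
i.e.\ the absent edges are pinned to $\theta_{ij}=0$ by Dirac masses. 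This lift is the natural one, since under model $G$ every coordinate $\theta_{ij}$ with $(i,j)\notin E_G$ is deterministically zero.

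With this convention in place, I would expand the product in (\ref{spike_slab_1})--(\ref{spike_slab_2}). Each of the $|E_{\rm full}|$ factors is a two-term sum, so the product distributes into $2^{|E_{\rm full}|}$ terms, one for every way of choosing, independently per edge, either the spike component $(1-p)\delta_0$ or the slab component $p\,\mathcal{N}(\cdot;0,\tau)$. Writing $S\subseteq E_{\rm full}$ for the set of edges on which the slab is chosen, the corresponding term is
\begin{align}
    \prod_{(i,j)\in S}p\,\mathcal{N}(\theta_{ij};0,\tau)\prod_{(i,j)\in E_{\rm full}\setminus S}(1-p)\,\delta_0(\theta_{ij}).
\end{align}

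Next I would exhibit the bijection $S\mapsto G$ with $E_G=S$ between subsets of $E_{\rm full}$ and candidate graphs in $\mathcal{G}$; this is where one uses that $\mathcal{G}$ consists of exactly those DAGs with $E_G\subseteq E_{\rm full}$, so that every subset $S$ indexes a distinct element of $\mathcal{G}$ and $|\mathcal{G}|=2^{|E_{\rm full}|}$. Under this correspondence the scalar factors collect to $p^{|S|}(1-p)^{|E_{\rm full}\setminus S|}=p(G)$ by (\ref{model_prior}), while the Gaussian and Dirac factors collect to exactly the lifted $p(\bm{\theta}_G|G)$ from the first step. Summing the expanded terms over all $S$ therefore reproduces $\sum_{G\in\mathcal{G}}p(\bm{\theta}_G|G)p(G)$, which is the claim.

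I expect the only genuine subtlety to be the first step---reconciling densities living on subspaces of different dimensions---rather than the algebra; the Dirac deltas are precisely the device that makes the mixture over the varying-dimension models $p(\bm{\theta}_G|G)$ well defined on the common space. One point worth verifying when writing the full proof is that every subset of $E_{\rm full}$ does yield an acyclic graph, which holds because $G_{\rm full}$ is itself a DAG (for instance in the antecedent-ordering example $E_{\rm full}=\{(i,j):i<j\}$), so the bijection genuinely covers all of $\mathcal{G}$.
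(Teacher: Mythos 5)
Your proof is correct, but it runs in the opposite direction from the paper's. The paper argues edge-wise and marginally: for a fixed $(i,j)\in E_{\rm full}$ it splits $\sum_{G\in\mathcal{G}}p(\theta_{ij}|G)p(G)$ into the graphs containing $(i,j)$ and those not, then observes that under (\ref{model_prior}) the total mass of graphs containing the edge is $p$ (and $1-p$ for the rest), recovering the two-component mixture (\ref{spike_slab_2}) for each coordinate separately; the factorization of the joint distribution across edges, and the convention $p(\theta_{ij}|G)=\delta_{0}(\theta_{ij})$ for $(i,j)\notin E_{G}$, are left implicit. You instead fix the joint object: expand the product $\prod_{(i,j)\in E_{\rm full}}\bigl((1-p)\delta_{0}+p\,\mathcal{N}(\cdot;0,\tau)\bigr)$ into its $2^{|E_{\rm full}|}$ terms, set up the bijection $S\mapsto G_{S}$ between subsets of $E_{\rm full}$ and elements of $\mathcal{G}$, and identify each term with $p(\bm{\theta}_{G}|G)p(G)$ after lifting $p(\bm{\theta}_{G}|G)$ to the full space with Dirac masses on the absent coordinates. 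Your route is somewhat longer but buys two things the paper glosses over: it proves the joint identity directly rather than only the per-coordinate marginals (the step from marginal to joint equality requires the independence of edges under $p(G)$, which your term-by-term matching uses explicitly and the paper's proof does not articulate), and it makes the measure-theoretic meaning of $p(\bm{\theta}_{G}|G)$ on the common space precise. The paper's route, conversely, is shorter and isolates the key probabilistic fact — that the edge-inclusion marginal of $p(G)$ is Bernoulli$(p)$ — without any combinatorial bookkeeping. Your closing remark about acyclicity is a sound observation: the bijection onto $\mathcal{G}$ needs every $E_{G}\subseteq E_{\rm full}$ to define a DAG, which holds because subgraphs of the (assumed acyclic) $G_{\rm full}$ are acyclic, consistent with the paper's count $|\mathcal{G}|=2^{|E_{\rm full}|}$.
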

\begin{proof}
For $(i,j)\in E_{\rm full}$,
\begin{multline}
    \sum_{G\in\mathcal{G}}p(\theta_{ij}|G)p(G)=\\
    \sum_{G:(i,j)\notin E_{G}}\delta_{0}(\theta_{ij})p(G)+\sum_{G:(i,j)\in E_{G}}\mathcal{N}(\theta_{ij};0,\tau)p(G)
\end{multline}
Then, $\sum_{G:(i,j)\in E_{G}}p(G)$ is the probability that a graph $G$ has the edge $(i,j)$ and it is given by $p$.
Similarly, $\sum_{G:(i,j)\notin E_{G}}p(G)=1-p$.
\end{proof}
When (\ref{spike_slab_1}) and (\ref{spike_slab_2}) are assumed, the Bayes optimal estimator can be rewritten as 
\begin{align}
    \int \bar{y}_{x}(G_{\rm full},\bm{\theta}_{G_{\rm full}})p_{ss}(\bm{\theta}_{G_{\rm full}}|D^{N}){\rm d}\bm{\theta}_{G_{\rm full}},\label{BO_estimator_ss}
\end{align}
where 
\begin{align}
    p_{ss}(\bm{\theta}_{G_{\rm full}}|D^{N})&=\frac{p(D^{N}|\bm{\theta}_{G_{\rm full}})p_{ss}(\bm{\theta}_{G_{\rm full}})}{\int p(D^{N}|\bm{\theta}_{G_{\rm full}})p_{ss}(\bm{\theta}_{G_{\rm full}}){\rm d}\bm{\theta}_{G_{\rm full}}}.\label{posterior_ss}
\end{align}
With the introduction of the spike-and-slab prior, the problem of calculating the summation over the candidate models has apparently disappeared, but the calculation of (\ref{BO_estimator_ss}) is still difficult because the calculation of the posterior distribution (\ref{posterior_ss}) is difficult.

We approximate $p_{ss}(\theta_{ij})$ by Gaussian scale mixture (GSM) \cite{andrews1974scale,boris2008scale}.
The probability density function (pdf) of GSM is given by
\begin{align}
    p_{gs}(\theta_{ij})=\int \mathcal{N}(\theta_{ij};0,\tau_{ij})p(\tau_{ij};\bm{\alpha})\rm{d}\tau_{ij},
\end{align}
where $\bm{\alpha}$ is the parameter of the distribution $p(\bm{\tau};\bm{\alpha})$.
Note that the values of $\tau_{ij}$ are different for different $(i,j)\in E_{\rm full}$.
If $\tau_{ij}$ is small, it means that the probability that $\theta_{ij}$ takes a value close to 0 is large, and we want to estimate them from the data.
The distribution $p(\tau_{ij};\bm{\alpha})$ is a distribution of the variance of Gaussian and it is often called mixing distribution.
It is known that GSM can express various distributions, such as Laplace distribution, Student-t distribution and horseshoe distribution, by changing the mixing distribution.
GSM is often used in the Bayesian sparse modeling literature.
See \cite{ji2008bayesian}, for example.

Even if we use GSM for the prior distribution $p(\theta_{ij})$, it is still difficult to calculate the exact posterior, however, there are various approximation algorithms to efficiently calculate the approximate posterior which are based on Expectation-Maximization (EM) algorithm \cite{figueiredo2003adaptive}, Markov-Chain Monte-Carlo (MCMC) algorithm \cite{park2008bayesian}, and Variational Bayes (VB) algorithm \cite{babacan2014bayesian}.
As an example, we give an estimation algorithm based on the variational Bayes algorithm.
See the supplementary material for the derivation of the algorithm.

Let $\bm{\theta}_{j}=\left(\theta_{ij}:(i,j)\in E_{\rm full}\right)$ and we describe an estimation algorithm for $\bm{\theta}_{j}$ since we can estimate $\bm{\theta}_{1},\ldots,\bm{\theta}_{m}$ independently.
We use the exponential distribution as the mixing distribution, that is, 
\begin{align}
    p(\tau_{ij}|\alpha_{ij})=\left\{
    \begin{array}{ll}
    \alpha_{ij}e^{-\alpha_{ij}\tau_{ij}} & \tau_{ij}\ge 0,\\
    0 & \tau_{ij}<0.
    \end{array}
    \right.
\end{align}
We have to determine or estimate the values of $\alpha_{ij}$.
We take a Bayesian hierarchical modeling approach, that is, we further assume gamma distribution for $\alpha_{ij}$ which is the conjugate distribution of the exponential distribution.
\begin{align}
    p(\alpha_{ij};\kappa,\nu)=\frac{\nu^{\kappa}}{\Gamma(\kappa)}\alpha_{ij}^{\kappa-1}e^{-\nu \alpha_{ij}},
\end{align}
where $\Gamma(\cdot)$ is the gamma function and $\kappa, \nu$ are hyper-parameters.
By setting $\kappa, \nu$ so that $p(\alpha_{ij};\kappa, \nu)$ is flat, the algorithm can estimate $\alpha_{ij}$ as well as other parameters.
The update equations for the variational Bayes algorithm are summarized as follows.
\begin{itemize}
    \item Update equation for $\bm{\theta}_{j}$
    \begin{align}
        \bar{\bm{\theta}}^{(t+1)}_{j}&=s_{\epsilon}\bm{\Sigma}^{(t+1)}_{j}\bm{X}_{j}^{T}\bm{x}_{j},\\
        \bm{\Sigma}_{j}^{(t+1)}&=\left(s_{\epsilon}\bm{X}_{j}^{T}\bm{X}_{j}+\bar{\bm{S}}_{j}^{(t)}\right)^{-1},
    \end{align}
    where 
    \begin{align}
        \bar{\bm{S}}_{j}^{(t)}={\rm diag}\left(\bar{s}^{(t)}_{j,1},\ldots, \bar{s}^{(t)}_{j,m_{j}}\right),
    \end{align}
    and ${\rm diag}(\bm{a})$ is the diagonal matrix whose diagonal elements are $\bm{a}$.
    \item Update equation for $\left\{\tau_{j,i}\right\}$
    \begin{align}
        \bar{\tau}_{j,i}^{(t+1)}&=\frac{1+\sqrt{\bar{\alpha}_{j,i}^{(t)}\left((\bar{\theta}_{j,i}^{(t+1)})^{2}+\Sigma_{j,ii}^{(t+1)}\right)}}{\bar{\alpha}_{j,i}^{(t)}},\\
        \bar{s}_{j,i}^{(t+1)}&=\sqrt{\frac{\bar{\alpha}_{j,i}^{(t)}}{(\bar{\theta}_{j,i}^{(t+1)})^{2}+\Sigma_{j,ii}^{(t+1)}}},
    \end{align}
    where $\bar{\theta}_{j,i}^{(t)}$ and $\Sigma_{j,ii}^{(t+1)}$ are the $i$-th element of $\bar{\bm{\theta}}_{j}^{(t)}$ and $(i,i)$-element of $\bm{\Sigma}_{j}^{(t+1)}$, respectively\footnote{To make the description concise, indices of the variables are replaced so that $\theta_{j,i}=\theta_{j_{i}j}$.}.
    \item Update equation for $\left\{\alpha_{j,i}\right\}$
    \begin{align}
        \bar{\alpha}_{j,i}^{(t+1)}=(\kappa+1)\left(\nu+\frac{\bar{\tau}_{j,i}^{(t+1)}}{2}\right).
    \end{align}
\end{itemize}
Starting from some initial values $\left\{\bar{s}_{j,i}^{(0)}\right\}, \left\{\bar{\alpha}_{j,i}^{(0)}\right\}$ and iterating the above algorithm until it converges, we obtain an approximation posterior distribution $q(\bm{\theta}_{j}|D^{N})=\mathcal{N}(\hat{\bm{\theta}}_{j}, \hat{\bm{\Sigma}}_{j}),j=1,\ldots, m$, where $\hat{\bm{\theta}}_{j}$ and $\hat{\bm{\Sigma}}_{j}$ are the convergence values of $\bar{\bm{\theta}}_{j}^{(t)}$ and $\bm{\Sigma}_{j}^{(t)}$, respectively.
Using these, (\ref{BO_estimator_ss}) and (\ref{BO_estimator_2}) are approximated as 
\begin{align}
    d_{VB}(D^{N})&=\int \bar{y}_{x}(G_{\rm full}, \bm{\theta}_{G_{\rm full}})\prod_{j=1}^{m}q(\bm{\theta}_{j}|D^{N}){\rm d}\bm{\theta}_{G_{\rm full}},\\
    \tilde{d}_{VB}(D^{N})&=\bar{y}_{x}\left(G_{\rm full}, \left\{\hat{\bm{\theta}}_{j}\right\}_{j=1,\ldots,m}\right),\label{VB_estimator2}
\end{align}
respectively.

The computational complexity of an iteration of the algorithm is $O(m_{j}^{3})$, which comes from the inversion of the $m_{j}\times m_{j}$ matrix.
If the dimension of the problem is too high to explicitly calculate $\bm{\Sigma}_{j}^{(t+1)}$, we have to use an approximate matrix inversion algorithm with low complexity.
In this paper, we do not deal with such high-dimensional problems.

\section{Experiments}

\subsection{Experiments on synthetic data}

To verify the effectiveness of the proposed method, we have to compare it with a conventional method.
A general method of calculating the MIE is to first estimate the graph structure $G$, estimate the parameters $\bm{\theta}_{G}$ of the conditional probability distributions, and finally calculate the MIE by (\ref{MIE}).
The K2 algorithm \cite{cooper1992bayesian} is used for the learning of the graph structure and the posterior mean is used for the estimator of $\bm{\theta}_{G}$.
The K2 algorithm requires a metric to compare two models.
The posterior probabilities of the models are used as the metric for the K2 algorithm.
It is a greedy algorithm and it adds a directed edge if the posterior of the model increases.
Another method to be compared is to estimate the MIE under the graph $G_{\rm full}$, which assumes that all potential direct edges exist.
It calculates $\bm{\theta}_{G_{\rm full}}^{MAP}$ according to (\ref{MAP_estimator}) and then computes the MIE.

Graphs in the form of Figure \ref{fig:experiment_graph} are used as $G_{\rm full}$, where $W_{1},\ldots, W_{n_{1}}$ and $Z_{1},\ldots,Z_{n_{2}}$ are the covariates.
This graph has following edges:
\begin{itemize}
    \item Edges $(W_{i},X)$ and $(W_{i},Y)$ for all $i=1,\ldots,n_{1}$
    \item Edges $(X, Z_{i})$ for all $i=1,\ldots,n_{2}$
    \item Edges $(W_{i}, Z_{j})$ for all $i=1,\ldots,n_{1}, j=1,\ldots, n_{2}$
\end{itemize}
A covariate $W_{i}$ causes a pseudo correlation between $X$ and $Y$ if the edges $(W_{i}, X)$ and $(W_{i}, Y)$ exist, while $X$ has an indirect effect on $Y$ through $Z_{i}$ if the edges $(X,Z_{i})$ and $(Z_{i},Y)$ exist.

First, we compare the proposed method with a conventional method based on the K2 algorithm and the Bayes quasi-optimal estimator for a small $G_{\rm full}$.
The proposed method is the estimator (\ref{VB_estimator2}) and the Bayes quasi-optimal estimator is the estimator (\ref{BO_estimator_2}).
The hyper-parameters $\kappa, \nu$ of the proposed method is set to $\kappa=\nu=10^{-6}$.
The number of the covariates is set to $n_{1}=n_{2}=2$ and the edge appearance probability is set to $p=0.5$.
We consider the problem of estimating the MIE (\ref{MIE}) when $x=1$.
Figure \ref{fig:experiment_Bayes} shows the squared error curves for the MIE as the functions of the sample size.
We can see that the proposed method outperforms the estimator based on the K2 algorithm and that based on the full model.
We can also see that its performance is close to that of the Bayes quasi-optimal estimator.

Then, we compare the proposed method with the conventional estimators for a large $G_{\rm full}$.
The number of the covariates is set to $n_{1}=n_{2}=30$ and the edge appearance probability is set to $p=0.3, 0.5, 0.7$.
The other settings are the same as the previous experiment.
In this case, since the number of candidate graphs $|\mathcal{G}|$ is $2^{960}$, we can not compute the Bayes quasi-optimal estimator.
Figure \ref{fig:experiment_2} shows the mean squared error curves for the MIE as the functions of the sample size.
We can see that the proposed method outperforms the K2 based estimator and full model based estimator for the all values of $p$.
We think that these experiments demonstrate that the Bayesian model averaging can be approximated with high accuracy by using GSM for the total effect inference problem.

\begin{figure}[t]
\vskip 0.2in
\begin{center}
\centerline{\includegraphics[width=\columnwidth]{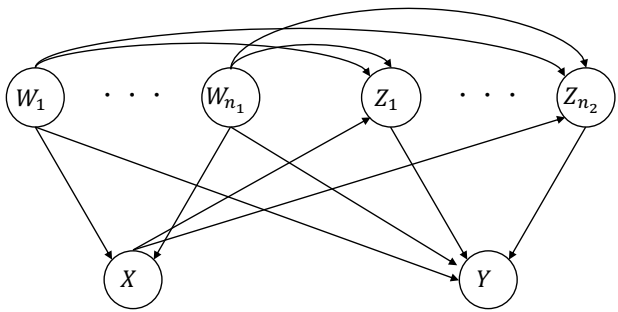}}
\caption{An example of $G_{\rm full}$ used for the experiment. $W_{1},\ldots,W_{n_{1}}$ would cause pseudo correlations between $X$ and $Y$ and $X$ would have indirect effects on $Y$ through $Z_{1},\ldots,Z_{n_{2}}$.}
\label{fig:experiment_graph}
\end{center}
\vskip -0.2in
\end{figure}

\begin{figure}[t]
\vskip 0.2in
\begin{center}
\centerline{\includegraphics[width=\columnwidth]{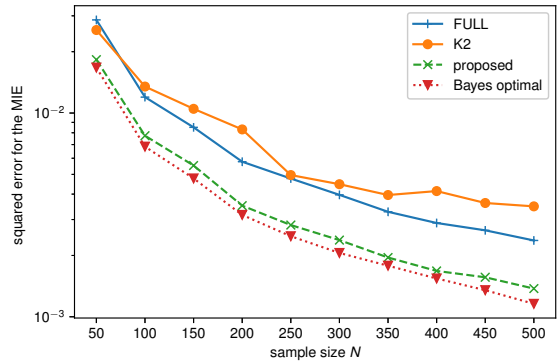}}
\caption{Curves of the squared errors for the MIE for a small $G_{\rm full}$ as the function  of sample size. The means of the squared errors of 1000 experiments are depicted.}
\label{fig:experiment_Bayes}
\end{center}
\vskip -0.2in
\end{figure}

\begin{figure*}[t]
\vskip 0.2in
\begin{center}
\centerline{\includegraphics[width=1.\linewidth]{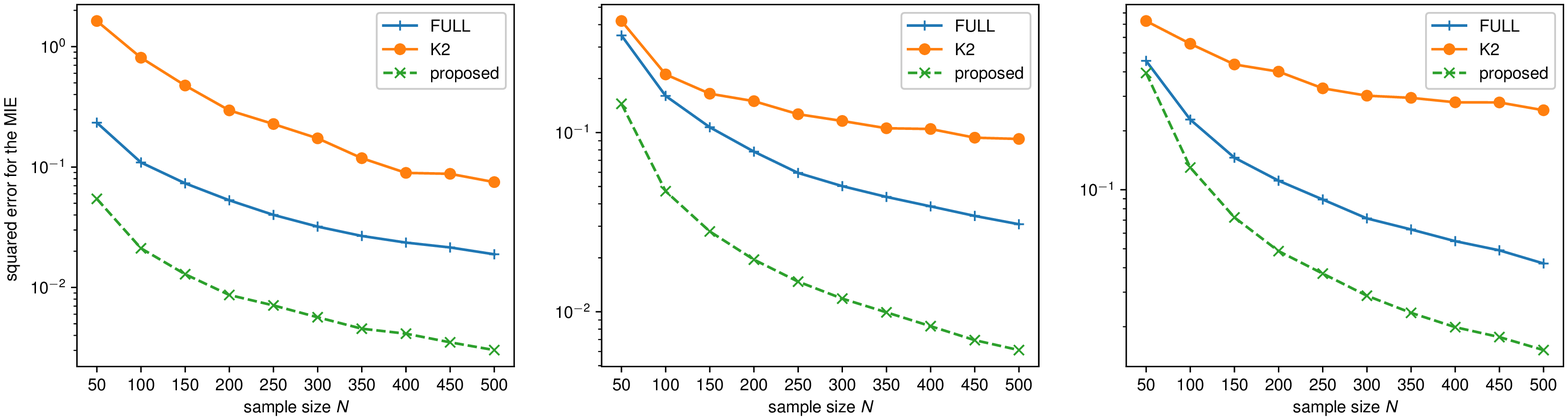}}
\caption{Curves of the squared errors for the MIE for $G_{\rm full}$ with $n_{1}=n_{2}=30$ (left: $p=0.3$, center: $p=0.5$, right: $p=0.7$). The means of the squared errors of 1000 experiments are depicted.}
\label{fig:experiment_2}
\end{center}
\vskip -0.2in
\end{figure*}

\begin{figure}[t]
\vskip 0.2in
\begin{center}
\centerline{\includegraphics[width=0.8\columnwidth]{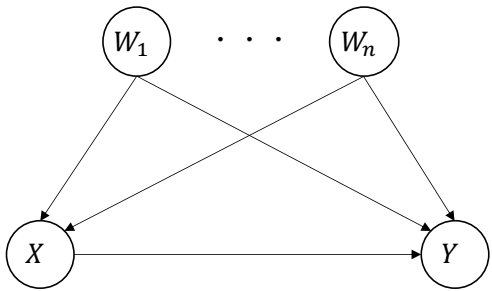}}
\caption{Assumed $G_{\rm full}$ for the experiments on semi-synthetic data.}
\label{fig:graph_for_semi_synthetic}
\end{center}
\vskip -0.2in
\end{figure}

\begin{figure*}[t]
\vskip 0.2in
\begin{center}
\centerline{\includegraphics[width=1.\linewidth]{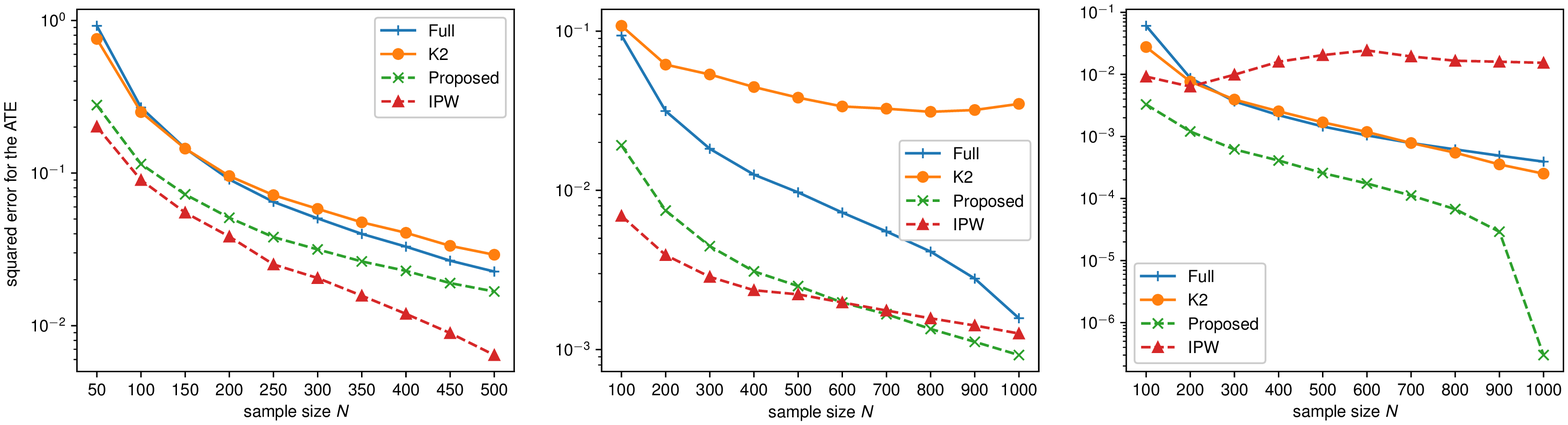}}
\caption{Curves of the squared errors for the MIE for semi-synthetic data (left: IHDP, center and right: LBIDD). The means of the squared errors of 1000 experiments are depicted.}
\label{fig:experiment_semi_synthetic}
\end{center}
\vskip -0.2in
\end{figure*}

\subsection{Experiments on semi-synthetic data}
In real-world applications, ground truth causal effects are rarely available.
Thus, we evaluate our proposed method through some experiments on semi-synthetic data.
For the evaluation, we use two pre-established benchmarking data for causal inference.
\begin{itemize}
\item \textbf{IHDP: } The dataset is constructed from the Infant Health and Development Program (IHDP) \cite{hill2011bayesian}. 
Each observation consists of 25 covariates, an indicator variable that indicates whether the infant received a special care, and an outcome variable that represents the final cognitive test score.
The dataset has 747 observations.

\item \textbf{LBIDD: } The dataset was developed for the 2018 Atlantic Causal Inference Conference competition \cite{shimoni2018benchmarking}. It was derived from the Linked Birth and Infant Death Data (LBIDD).
There are 63 distinct data which are generated from different distributions and we randomly pick two of them.
Each observation consists of 177 covariates, an indicator variable for the treatment status, and an outcome variable.
The dataset has 1000 observations.

\end{itemize}

Both IHDP and LBIDD data consists of some covariates, an indicator variable for the treatment status, and an outcome variable.
We denote the covariates as $W_{1},\ldots, W_{n}$, the indicator variable for the treatment status as $X$, and the outcome variable as $Y$.
We are interested in estimating $\bar{y}_{1}-\bar{y}_{0}$, namely, the average treatment effect (ATE) of $X$ on $Y$.
Since both datasets contain counterfactual data, we can calculate the treatment effect for each record.
We assume that the average of them is the true value of the ATE and evaluate the estimators with the squared errors between the true ATE and estimates.

For both datasets, we assume that $G_{\rm full}$ is the form of Figure \ref{fig:graph_for_semi_synthetic}.
That is, we assume that $W_{1},\ldots,W_{n}$ are the possible common causes between $X$ and $Y$.
The hyperparameters for each estimation algorithm are set to the same values as those for the experiments on synthetic data.

We also compare the methods with the IPW estimator, a well known non-Bayesian estimator of the ATE \cite{horvitz1952generalization}.
IPW estimator requires an estimator of the conditional probability $p(x|w_{1},\ldots,w_{n})$.
We modeled the conditional probability by logistic regression model and estimate its parameters by $\ell_{1}$ penalized maximum likelihood estimation.

Figure \ref{fig:experiment_semi_synthetic} shows the mean squared curves for the ATE.
We can see that the proposed method is superior to the conventional methods.
Especially, for the first LBIDD data, the performance of the K2 algorithm based method is poor and it seems to fail to capture a good causal model.
Comparering the proposed method with the IPW estimator, the IPW estimator has a better estimation accuracy than the proposed estimator for IHDP data; the IPW and proposed estimators are competitive for the first LBIDD data; the proposed estimator outperforms IPW estimator for another LBIDD data.
The experimental results imply that our proposed method is robust to misspecification of the causal model.
We believe that this robustness comes from the fact that the proposed method is build on the idea of weighting the estimates computed under multiple models.

\section{Conclusion}

We proposed a Bayes optimal estimator of the MIE which minimizes the squared error loss on average when the data generating model is an unknown random variable.
The proposed estimator has to estimate the causal effects under the all candidate causal graphs and it is hard to compute when the number of candidate causal graphs is large.
We also proposed an approximation algorithm for the optimal estimator by using a sparse modeling technique.
Some numerical experiments corroborated the effectiveness of our proposed methods.
The proposed methods can help analysts when there is some ambiguity in the knowledge of the data generating model.

We have made some strong assumptions in this study:
\begin{itemize}
    \item Gaussian assumption on the prior or the error terms
    \item Assumption that there is no hidden confounders
    \item Assumption that causal orders are known
    \item Linear assumption on the SCM
\end{itemize}
These assumptions might limit the range of the real-world applications of the proposed methods.
We think that we can relax some assumptions by combining the proposed methods with previously known results.
For example, if we can assume non-Gaussian distributions on the error terms, we can relax the assumption on causal orders since models with opposite causal directions would have different posterior probabilities.
In such cases, the method proposed in this paper needs to be modified, and the constrution of such an algorithm would be an attractive research direction.

\newpage

\ 

\newpage

\section*{Acknowledgments}

This research is partially supported by Information Services International-Dentsu (ISID), Ltd. Research Grant, the Kayamori Foundation of Informational Science Advancement, and No. 19K12128 of Grant-in-Aid for Scientific Research Category (C) and No. 18H03642 of Grant-in Aid for Scientific Research Category (A), Japan Society for the Promotion of Science.

\bibliographystyle{apalike}
\bibliography{aistats2021}

\onecolumn
\appendix

\section{Derivation of the analytical form of $p(G|D^{N})$ and $p(\bm{\theta}_{G}|G, D^{N})$}
First, we derive $p(\bm{\theta}_{G}|G,D^{N})$ for a fixed $G\in\mathcal{G}$.
For $j\in\left\{1,\ldots,m\right\}$, let ${\rm pa}(X_{j})=(X_{j_{1}},X_{j_{2}},\ldots,X_{j_{m_{j}}})$ and $\bm{X}_{j}=[\bm{x}_{j_{1}},\bm{x}_{j_{2}},\ldots,\bm{x}_{j_{m_{j}}}]\in \mathbb{R}^{N\times m_{j}}$, where $\bm{x}_{i}\in\mathbb{R}^{N}$ is the sample of $X_{i}$.
Then, for $\bm{\theta}_{j}=(\theta_{j_{1}j},\theta_{j_{2}j},\ldots,\theta_{j_{m_{j}}j})$, the likelihood function $p(D^{N}|G,\bm{\theta}_{j})$ is given by
\begin{align}
    p(D^{N}|G,\bm{\theta}_{j})=\mathcal{N}(\bm{x}_{j};\bm{X}_{j}\bm{\theta}_{j},\tau\bm{I}_{m_{j}})+{\rm const.},
\end{align}
where $\bm{I}_{m_{j}}$ is the identity matrix of size $m_{j}$.
Since we assumed a conjugate Gaussian prior for $p(\bm{\theta}_{G}|D)$, the posterior distribution $p(\bm{\theta}_{j}|G,D^{N})$ is given by
\begin{align}
   p(\bm{\theta}_{j}|G,D^{N})&=\mathcal{N}(\bm{\theta}_{j};\bm{\mu}_{j},\bm{\Sigma}_{j}),\label{parameter_posterior}\\
   \bm{\mu}_{j}&=s_{\epsilon}\bm{\Sigma}_{j}\bm{X}_{j}^{T}\bm{x}_{j},\label{posterior_mean}\\
   \bm{\Sigma}_{j}&=\left(s_{\epsilon}\bm{X}_{j}^{T}\bm{X}_{j}+\tau^{-1}\bm{I}_{m_{j}}\right)^{-1}.
\end{align}

Further, we can calculate the likelihood $p(D^{N}|G)$ as follows.
\begin{align}
    p(D^{N}|G)&=\prod_{j=1}^{m}p(\bm{x}_{j}|\bm{X}_{j}),\\
    p(\bm{x}_{j}|\bm{X}_{j})&=\frac{m_{j}}{2}\ln \tau^{-1}+\frac{N}{2}\ln s_{\epsilon}-E_{j}
    -\frac{1}{2}\ln|\bm{A}_{j}|-\frac{N}{2}\ln(2\pi),\label{model_likelihood}\\
    E_{j}&=\frac{s_{\epsilon}}{2}||\bm{x}_{j}-\bm{X}_{j}\bm{\mu}_{j}||^{2}+\frac{\tau^{-1}}{2}\bm{\mu}_{j}^{T}\bm{\mu}_{j},\\
    \bm{A}_{j}&=\tau^{-1}\bm{I}_{m_{j}}+s_{\epsilon}\bm{X}_{j}^{T}\bm{X}_{j}.
\end{align}
We can calculate the posterior probability $p(G|D^{N})$ by using the Bayes rule.
See \cite{bishop2006pattern} for the derivation of (\ref{parameter_posterior}) and (\ref{model_likelihood}).

\section{Derivation of Variational Bayes algorithm}
The joint distribution for $\bm{x}_{j}, \bm{X}_{j}, \bm{\theta}_{j}, \bm{\tau}_{j}, \bm{\alpha}_{j}$ is factorized as 
\begin{align}
    p(\bm{x}_{j},\bm{X}_{j},\bm{\theta}_{j}, \bm{\tau}_{j}, \bm{\alpha}_{j})=
    p(\bm{x}_{j}|\bm{X}_{j}, \bm{\theta}_{j})p(\bm{\theta}_{j}|\bm{\tau}_{j})p(\bm{\tau}_{j}|\bm{\alpha}_{j})p(\bm{\alpha}_{j};\kappa,\nu).
\end{align}
Let $\bm{\xi}=(\bm{\theta}_{j}, \bm{\tau}_{j}, \bm{\alpha}_{j})$.
The variational Bayes method finds an approximation distribution $q(\bm{\xi})$ that approximates $p(\bm{\xi}|\bm{x}_{j},\bm{X}_{j})$.
The goal is to find $q(\bm{\xi})$ that minimizes the Kullback-Leibler divergence ${\rm KL}(q(\bm{\xi})||p(\bm{\xi}|\bm{x}_{j},\bm{X}_{j}))$:
\begin{align}
    q^{*}(\bm{\xi})&=\argmin_{q(\bm{\xi})}\int q(\bm{\xi})\ln \frac{q(\bm{\xi})}{p(\bm{\xi}|\bm{x}_{j},\bm{X}_{j})}{\rm d}\bm{\xi}\\
    &=\argmin_{q(\bm{\xi})}\int q(\bm{\xi})\ln \frac{q(\bm{\xi})}{p(\bm{\xi},\bm{x}_{j},\bm{X}_{j})}{\rm d}\bm{\xi}.\label{KL}
\end{align}
However, it is difficult to minimize (\ref{KL}) for arbitrary distributions.
We limit the optimization distributions to $q(\bm{\xi})$ that can be factorized as \begin{align}
    q(\bm{\theta}_{j},\bm{\tau}_{j},\bm{\alpha}_{j})=q(\bm{\theta}_{j})q(\bm{\tau}_{j})q(\bm{\alpha}_{j}).
\end{align}
For $\bm{\xi}_{k}\in\bm{\xi}$, the variational Bayes method minimizes (\ref{KL}) by updating $q(\bm{\xi}_{k})$ sequentially.
With the distribution $q(\bm{\xi}\setminus \bm{\xi}_{k})$ of $\bm{\xi}\setminus \bm{\xi}_{k}$ fixed, the update equation of $q(\bm{\xi}_{k})$ is given as follows \cite{bishop2006pattern}.
\begin{align}
    \ln q^{*}(\bm{\xi}_{k})={\rm E}_{q(\bm{\xi}\setminus \bm{\xi}_{k})}\left[\ln p(\bm{\xi},\bm{x}_{j},\bm{X}_{j})\right]+{\rm const.}\label{VB_update}
\end{align}
In the following, we describe concrete update equation of each $q(\bm{\xi}_{k})$.
To keep the description concise, for functions $f(\bm{\xi}_{k})$, the expectation taken by $q(\bm{\xi}_{k})$ at the point is written as $\left<f(\bm{\xi}_{k})\right>$.

\subsection*{Update equation of $q(\bm{\theta}_{j})$}
From (\ref{VB_update}), the update equation of $q(\bm{\theta}_{j})$ is 
\begin{align}
    \ln q^{*}(\bm{\theta}_{j})={\rm E}_{q(\bm{\tau}_{j})}\left[p(\bm{x}_{j}|\bm{X}_{j},\bm{\theta}_{j})p(\bm{\theta}_{j}|\bm{\tau}_{j})\right]+\mbox{const.}
\end{align}
Using the assumption that $p(\bm{x}_{j}|\bm{X}_{j},\bm{\theta}_{j})$ and $p(\bm{\theta}_{j}|\bm{\tau}_{j})$ are Gaussian distributions, we obtain
\begin{align}
    q^{*}(\bm{\theta}_{j})&=\mathcal{N}(\bar{\bm{\theta}}_{j},\tilde{\bm{\Sigma}}_{j}),\\
    \bar{\bm{\theta}}_{j}&=s_{\epsilon}\tilde{\bm{\Sigma}}_{j}\bm{X}_{j}^{T}\bm{x}_{j},\\
    \tilde{\bm{\Sigma}}_{j}&=\left(s_{\epsilon}\bm{X}_{j}^{T}\bm{X}_{j}+\left<\bm{S}_{\bm{\tau}_{j}}\right>\right)^{-1},
\end{align}
where
\begin{align}
    \bm{S}_{\bm{\tau}_{j}}={\rm diag}\left(\tau_{j,1}^{-1},\ldots,\tau^{-1}_{j,m_{j}}\right).
\end{align}

\subsection*{Update equation of $q(\bm{\tau})$}
From (\ref{VB_update}), the update equation of $q(\bm{\tau})$ is 
\begin{align}
    \ln q^{*}(\bm{\tau})={\rm E}_{q(\bm{\theta}_{j},\bm{\alpha}_{j})}\left[p(\bm{x}_{j}|\bm{X}_{j},\bm{\theta}_{j})p(\bm{\theta}_{j}|\bm{\tau}_{j})p(\bm{\theta}_{j}|\bm{\alpha}_{j})\right]+\mbox{const.}\label{tau_update}
\end{align}
From the model assumption, without loss of generality, we can assume that $q(\bm{\tau}_{j})$ is decomposed as 
\begin{align}
    q(\bm{\tau}_{j})=\prod_{i=1}^{m_{j}}q(\tau_{j,i}).
\end{align}
By arranging the terms in (\ref{tau_update}) that include $\tau_{j,i}$, we obtain
\begin{align}
    q^{*}(\tau_{j,i})=\mathcal{GIG}\left(\left<\alpha_{j,i}\right>,\left<\theta_{j,i}^{2}\right>,\frac{1}{2}\right),
\end{align}
where $\mathcal{GIG}(a,b,\rho)$ denotes the generalized inverse Gaussian distribution, whose probability density function is given by
\begin{align}
    p(x;a,b,\rho)=\frac{(a/b)^{\rho/2}}{2K_{\rho}(\sqrt{ab})}x^{\rho-1}\exp\left(-\frac{ax+bx^{-1}}{2}\right),
\end{align}
where $K_{\rho}$ is a modified Bessel function of the second kind.
To update $q(\bm{\theta}_{j})$ and $q(\bm{\alpha}_{j})$, we need the expected values $\left<\tau_{j,i}\right>$ and $\left<\tau_{j,i}^{-1}\right>$.
They are given by
\begin{align}
    \left<\tau_{j,i}\right>&=\frac{1+\sqrt{\left<\tau_{j,i}\right>\left<\theta_{j,i}^{2}\right>}}{\alpha_{j,i}},\\
    \left<\tau_{j,i}^{-1}\right>&=\sqrt{\frac{\left<\alpha_{j,i}\right>}{\left<\theta_{j,i}^{2}\right>}}.
\end{align}

\subsection*{Update equation of $q(\bm{\alpha})$}
From (\ref{VB_update}), the update equation of $q(\bm{\alpha})$ is 
\begin{align}
    \ln q^{*}(\bm{\alpha})={\rm E}_{q(\bm{\tau})}\left[p(\bm{\tau}|\bm{\alpha})p(\bm{\alpha};\kappa,\nu)\right]+\mbox{const.}\label{update_alpha}
\end{align}
As in the case for $\bm{\tau}_{j}$, we can assume that $q(\bm{\alpha}_{j})$ is decomposed as 
\begin{align}
    q(\bm{\alpha}_{j})=\prod_{i=1}^{m_{j}}q(\alpha_{j,i}).
\end{align}
By arranging the terms in (\ref{update_alpha}) that include $\alpha_{j,i}$, we obtain
\begin{align}
    q^{*}(\alpha_{j,i})=\mathcal{GA}\left(\kappa+1,\nu+\frac{\left<\tau_{j,i}\right>}{2}\right),
\end{align}
where $\mathcal{GA}(\kappa,\nu)$ is the gamma distribution, whose probability density function is given by
\begin{align}
     p(x;\kappa,\nu)=\frac{\nu^{\kappa}}{\Gamma(\kappa)}x^{\kappa-1}e^{-\nu x}.
\end{align}
To update $q(\bm{\tau})$, we need the expected value $\left<\alpha_{j,i}\right>$.
It is given by
\begin{align}
    \left<\alpha_{j,i}\right>=\left(\kappa+1\right)\left(\nu+\frac{\left<\tau_{j,i}\right>}{2}\right).
\end{align}

\end{document}